\documentclass[journal]{IEEEtran}
\usepackage{amsmath,amssymb,amsfonts,dsfont, amsthm}
\usepackage{bm}
\usepackage{algorithm, algorithmicx}
\usepackage{algpseudocode}
\usepackage{subcaption}
\def\BibTeX{{\rm B\kern-.05em{\sc i\kern-.025em b}\kern-.08em
		T\kern-.1667em\lower.7ex\hbox{E}\kern-.125emX}}
\usepackage{makecell}
\usepackage{breqn}
\usepackage{tikz}
\usepackage[english]{babel}
\usepackage{multirow}
\usepackage{comment}
\usepackage{upgreek}
\usepackage{textcomp}
\usepackage{hyperref}
\newtheorem{theorem}{Theorem}[section]

\setcellgapes{3pt}

\begin{document}

\title{Distributed Multi-Sensor Control for Multi-Target Tracking Using Adaptive Complementary Fusion for LMB Densities}

\author{Aidan~Blair, Amirali~Khodadadian~Gostar, Alireza~Bab-Hadiashar, Xiaodong~Li, and~Reza~Hoseinnezhad
\thanks{Submitted \today.}
\thanks{All authors are with RMIT University, Melbourne, Victoria (email: aidan.blair@rmit.edu.au).}}

\maketitle


\begin{abstract}
Tracking multiple targets in dynamic environments using distributed sensor networks is a fundamental problem in statistical signal processing. In such scenarios, the network of mobile sensors must coordinate their actions to accurately estimate the locations and trajectories of multiple targets, balancing limited computation and communication resources with multi-target tracking accuracy. Multi-sensor control methods can improve the performance of these networks by enabling efficient utilization of resources and enhancing the accuracy of the estimated target states. This paper proposes a novel multi-sensor control method that utilizes multi-agent coordinate descent to address this problem, ensuring distributed consensus of optimal sensor actions throughout the sensor network. To achieve this, a novel adaptive complementary fusion approach that prioritizes information from the most informative sensors is developed. Our method improves computational tractability and enables fully distributed control, ensuring the scalability and flexibility necessary for large-scale real-time sensing systems. Experimental results on several challenging multi-target tracking scenarios demonstrate that our approach significantly improves both multi-target tracking accuracy and computation efficiency over competing methods.
\end{abstract}

\begin{IEEEkeywords}
    sensor networks, distributed control, sensor fusion, multi-target tracking, random finite sets
\end{IEEEkeywords}

\section{Introduction}
\label{sec:Introduction}
This article addresses the problem of distributed multi-sensor control for multi-target tracking, a central challenge in statistical signal processing for large-scale sensing systems. In distributed sensor networks, each sensor node independently collects measurements of targets within its Field-of-View (\text{FoV}) and executes an on-board stochastic multi-target filter that produces a multi-object posterior. These posterior probabilities are then exchanged across the network and fused locally at each sensor node, allowing each sensor to have situational awareness of the larger environment. This capability is essential for multi-target tracking systems, where continuous real-time information fusion from multiple sensors enables safe and efficient autonomous operation in complex and dynamic environments.

The need for multi-sensor control emerges when sensor states, such as position and orientation, can be adjusted via control commands, i.e., \textit{movement}. An instance illustrating the utilization of multiple sensors mounted on Unmanned Air Vehicles (UAVs) for inspecting the sea surface to detect and track sea vessels is depicted in Figure~\ref{fig:overview}. In this illustration, each UAV is required to autonomously determine its subsequent actions (in terms of transitioning to a new location and orientation), relying on the hypothesized future movements of other UAVs within the network. The decision must be made in a manner that optimizes the acquisition of \textit{most informative} measurements by the interconnected UAV network, aiming to obtain a situational awareness that is as comprehensive as possible.

Multi-sensor control is a complex and critical area of research that has gained significant attention in various fields, including but not limited to autonomous vehicle networks~\cite{ref:AutonomousVehicleControl}, surveillance and intelligence applications~\cite{ref:SurveillanceNetwork}, and defense applications of multi-target tracking~\cite{ref:MTTControl, ref:GroundAreaDetection}. In all such applications, a network of sensors are used to acquire and fuse information for situational awareness. Depending on how information is communicated in the network, it may be characterized as a centralized, decentralized, or distributed sensor network~\cite{ref:NetworkArchitecture,ref:DistributedArchitecture}. 

\begin{figure}[t]
	\centering
	\includegraphics[width=\columnwidth]{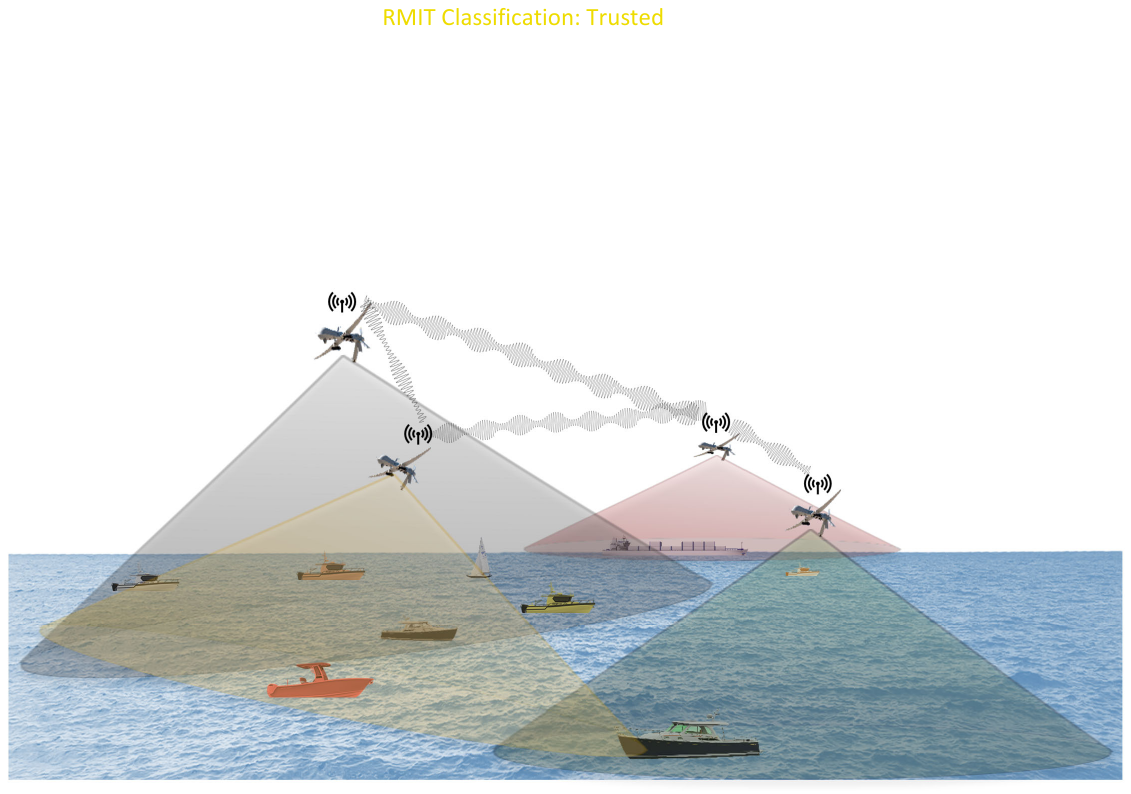}
	\caption{Multiple UAVs must cooperatively inspect part of the ocean for monitoring of marine vehicles on the sea surface.}
	\label{fig:overview}
\end{figure}

Centralized networks are made up of a central processing node that gathers data from every other node in the network, processes it, and distributes the results back to the other nodes. In decentralized networks, nodes form local clusters, with each cluster containing a central node that communicates with the other nodes in the cluster. In addition, these central nodes can communicate with other central nodes in the network, facilitating information transfer throughout the network. Distributed networks lack central processing nodes. Instead, nodes can only communicate with their nearest neighbors, usually within a limited communication range.  Distributed networks offer a pragmatic solution for implementing large-scale sensor networks, as the computational cost does not grow exponentially with the number of sensors as it does with centralized networks. In practical applications with large sensor networks, distributed communication and computing is the preferred option. Hence, we focus on distributed multi-sensor control.


Our proposed solution employs Random Finite Sets (RFS), a powerful Bayesian framework for multi-object tracking, wherein the sets of target states and observations are represented as random finite sets. Compared to other multi-target tracking frameworks such as particle filters and multiple hypothesis tracking~\cite{ref:JSCTrackingFusion}, the RFS framework has several advantages including implicitly handling dynamic numbers of targets and data association. A number of filters have been developed in this framework, including the well-known Probability Hypothesis Density (PHD)~\cite{ref:MahlerBook} and Multi-Bernoulli (MeMBer)~\cite{CBMEMBER_filter} filters. The latest development in RFS filters, called \textit{labeled RFS filters}, appends the label of each target into its single-target state and propagates target labels with their states to directly create \textit{target trajectories}~\cite{Vo2013,LMB_Vo2,reuter2015multiple}. RFS filters have found applications in various fields, including cell lineage tracking~\cite{ref:CellTracking}, Intelligent Transport Systems (ITS)\cite{ref:NidaInteraction}, and information fusion\cite{Mahler2014AdvancesIS, ref:InformationFusionRFS}.

Various sensor control methods have been proposed for RFS filters using different objective functions~\cite{ref:BrankoRistic,ref:MultiBernoulliControl,ref:VoControl, ref:HoangControl}. However, these methods have been explored mainly in single-sensor control scenarios. Recent works have introduced multi-sensor control for labeled RFS filters. For example, Jiang et al.~\cite{ref:MengMultiSensorControl} proposed a Cauchy-Schwarz Divergence (CSD)-based objective function combined with Generalized Labeled Multi-Bernoulli (GLMB) filters. Wang~\cite{ref:Xiaoying} proposed a multi-sensor control solution in which a task-driven cost (which is a mixture of the expected localization and cardinality errors) was minimized using coordinate descent, with Labeled Multi-Bernoulli (LMB) filters running on-board each sensor node. Panicker et~al.~\cite{PANICKER2020107451} developed a multi-sensor control solution that focuses on targets of interest, again with LMB filters running on each node.

The contributions of this paper are as follows:
\begin{itemize}
    \item Introducing an Adaptive Complementary Fusion (ACF) formulation for the fusion of LMB densities that prioritizes informative sensors, improving fusion accuracy.
    \item Introducing the Fully Distributed Coordinate Descent Sensor Control (FDCD-SC) algorithm, optimized for use with LMB filters across sensor nodes in a distributed network, significantly reducing computational complexity while enhancing tracking accuracy and scalability.
    \item Proposing a constrained information-theoretic objective function tailored for the FDCD-SC algorithm, yielding improved performance over existing methods.
\end{itemize}

The structure of the paper is as follows: Section~\ref{sec:Related_Work} reviews current multi-sensor control methods with a focus on RFS filter-based approaches; Section~\ref{sec:Background} provides background information on the RFS framework and formulates the problem of multi-sensor control within the general distributed multi-target tracking application; Section~\ref{sec:Solution} details the proposed ACF formulation and FDCD-SC algorithm; Section~\ref{sec:Experiment} presents experimental results, comparing our approach with state-of-the-art techniques; and Section~\ref{sec:Conclusion} concludes the paper.

\section{Related Work}
\label{sec:Related_Work}

In the RFS framework, various solutions have been developed for multi-target tracking-related problems. Some examples include track-before-detect visual tracking applications~\cite{s20030929}, sensor management in target tracking applications~\cite{PANICKER2020107451}, and information fusion~\cite{9311857}. Several information fusion approaches have been explored for LMB filters, ranging from Cooperative CSD-based fusion~\cite{ref:CauchySchwarzLMB} to Consensus-based fusion~\cite{ref:ConsensusLMB} and Complementary fusion~\cite{ref:Klupacs, ref:KlupacsFusion, ref:KlupacsComplementaryGCI}. 

The communication between sensors in a distributed sensor network has been the subject of significant research, but distributed multi-sensor control has not received as much attention. Akselrod and Kirubarajan~\cite{ref:MarkovDecisionProcesses} proposed a distributed control algorithm for multi-target tracking by a swarm of UAVs, where measurements and associated tracks from each UAV are broadcast throughout the network and fused with the local measurements at each UAV. However, each UAV chooses an action to take without considering the actions of other UAVs, which could lead to multiple nearby UAVs converging on the same target, which is counterproductive in a large distributed network. To address this issue, Fu and Yang~\cite{ref:MultitargetTrackingMobileSensor} proposed a control framework aimed at maximizing tracking accuracy while guaranteeing tracking coverage. Each sensor can be allocated to track a particular target or group of targets, and sensor positions and allocations are shared distributively throughout the network to jointly optimize for tracking accuracy while ensuring that at least one sensor is allocated to every target.

Yuan, Zhan, and Li~\cite{ref:DecentralizedQuadcopterControl} use a distinct approach by flocking quadcopter UAVs in a decentralized manner. Each UAV shares its local information with neighboring nodes, which is then used by a decentralized model predictive control flocking algorithm to form a quasi \(\alpha\)-lattice. In this application, sensors have access to information only from their neighboring sensors, limiting their ability to form a fully accurate lattice. Li et al.~\cite{ref:DistributedSparsityControl} recently proposed a fully distributed multi-sensor control approach that uses auctioned POMDPs to determine each sensor's action. While this approach is distributed and computationally efficient, it has the limitation of each possible action being taken by at most one sensor at a time.

Recently, several diverse RFS-based multi-sensor control approaches have been developed. These include a decentralized multi-sensor control algorithm combining Thompson sampling with PHD filters designed for the challenging task of targets outnumbers sensors~\cite{ref:TargetsOutnumberAgents}, a method for tracking UAV swarms using a likelihood function that integrates swarm characteristics with LMB filters to improve tracking performance~\cite{ref:TharaniControl}, and a deep reinforcement learning-based distributed multi-sensor control approach using LMB filters that learns the targets' motion models and inherently balances exploration and exploitation~\cite{ref:DeepReinforcementLearning}. Additionally, coordinate descent has recently been used for multi-sensor control optimization in RFS-based multi-object tracking applications, both in a centralized manner~\cite{ref:Xiaoying} and a distributed manner~\cite{ref:AidanICCAIS}.

Generally, two types of objective functions have appeared in the sensor control literature: the \textit{task-driven} and the \textit{information-driven}. The former type is usually defined to directly optimize a particular aspect of tracking. For instance, in some applications, the highest priority is given to the \textit{coverage} aspect. Having a network of sensors with limited fields of view, distributing the sensors for maximum coverage could be achieved by choosing the \textit{cardinality estimate} returned by the fused pseudo-posterior. The resulting sensor control solution, originally designed for one sensor, is called ``Posterior Expected Number of Targets'' (PENT)~\cite{mahler2004probabilistic}. Alternatively, estimation of tracking error could be the particular aspect of the tracking task being optimized by sensor control. In that case, the objective function would be formulated as a cost function dependent on the expected error of estimation or tracking after a sensor action is taken~\cite{ref:PEECS,ref:MultiBernoulliControl,ref:RobustMultiBernoulliControl}. Minimizing such task functions leads to selecting the action that is statistically expected to return minimum error. 

Information-driven objective functions are usually information-theoretic reward functions designed to maximize the expected \textit{information gain} from the prior density to the posterior after a sensor action occurs, typically quantified using an information divergence from the prior to posterior. Commonly used divergences in stochastic sensor control are the R{\'e}nyi divergence~\cite{ref:Divergences,ref:HoangControl} and CSD~\cite{ref:CauchySchwarzLMB,ref:MengMultiSensorControl,ref:SabitaCauchySchwarz}. Some objective functions combine information-driven and task-driven components. Li et al.~\cite{ref:DistributedSparsityControl} implement a sparsity-promoting objective function, a weighted sum of the normalized CSD and a sparsity-promoting term, that discourages multiple sensors from observing a subset of targets and ignoring the other targets. Recently, a multi-sensor control objective function using the Kullback-Leibler divergence has been proposed~\cite{ref:AidanICCAIS, ref:AidanArxiv}.

\section{Problem Statement and Background}
\label{sec:Background}

\subsection{Overall architecture}
This work looks at the situation of a distributed network of $\mathcal{S}$ monostatic sensor nodes, denoted $\mathbb{S}=\left\{1,\ldots,\mathcal{S}\right\}$. Each node $s\in\mathbb{S}$ can transceive information with a subset of the other nodes in the network, typically those within a certain distance and without occlusion. This collection of neighboring nodes of $s$ is denoted $N^{(s)}$. The aim of the sensor network is to accurately track multiple dynamic targets in the environment. Each sensor node is capable of acquiring measurements of targets within a limited FoV and is running a local RFS filter $\pi_s$ on-board to track the target states.

Figure~\ref{fig:overall_blockdiagram} provides a general overview of the entire sequence of operations executed by each sensor node, $s$, in the sensor network. For every filtering time instance $k$, the node $s$ carries out the prediction step of its multi-object filter, both on its own multi-object prior, $\pi_{k-1,s}$, and on the latest multi-object posterior densities, $\pi_{k-1,s'}$, obtained at the previous filtering time $k-1$ from all sensor nodes $s'\in \mathbb{S}$.

The primary focus of this work is the design of a solution for the \textit{multi-sensor control} block, highlighted in red. This block takes as input all the predicted multi-object densities, $\pi_{k|k-1,s'}, \, s'\in \mathbb{S}$, and then determines the \textit{optimal} control command (e.g., movement or rotation), $u_{k,s}^*$, for sensor node $s$. Following this, the system executes the control command on the sensor (e.g., moving or rotating it as required) and subsequently performs detection, yielding a measurement set $Z_{k,s}$. This set is utilized to update the predicted density, resulting in the local posterior $\pi_{k,s}$.

The obtained local posterior is then transmitted to all neighboring sensors and serves as the local multi-object prior for the next time step. To achieve situational awareness regarding the number and states of objects of interest, each node fuses its local posterior with all received posteriors. It then estimates the multi-object state from the fused posterior.

\begin{figure}
    \centering
    \includegraphics[width=\columnwidth]{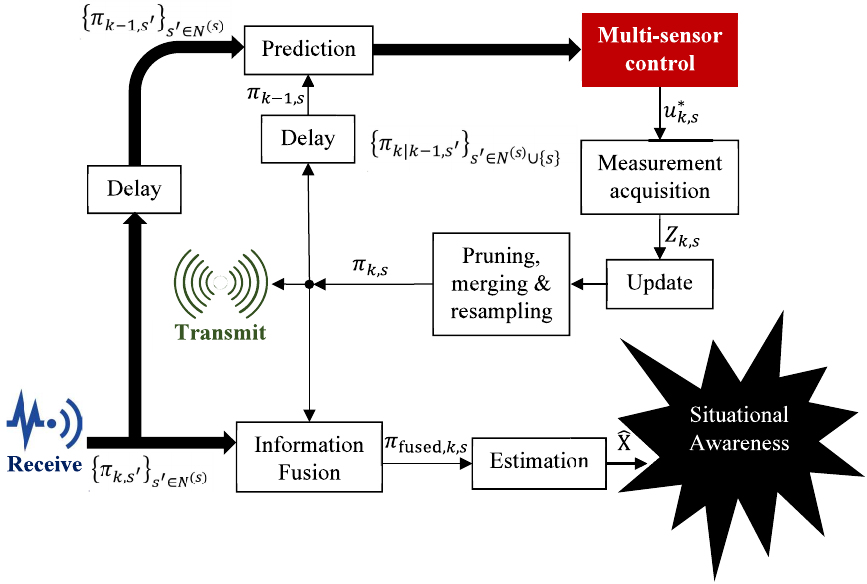}
    \caption{Overall operations running onboard each sensor node $s$, for tracking and control.}
    \label{fig:overall_blockdiagram}
\end{figure}

\subsection{System components and model assumptions} 
The prominent component of the overall architecture shown in Figure~\ref{fig:overall_blockdiagram} is the \textit{multi-object filter} running at each node. Before the filter of choice is presented, some notation and assumptions need to be clarified.

For a multi-object system at discrete filtering time step \(k\), the multi-object state $\bm{X}_k$ can be described by a labeled RFS composed of $N_k=\lvert\bm{X}_k\rvert$ single-object states,
\begin{equation}
    \bm{X}_k=\{(x_{1,k},\ell_1),\ldots,(x_{N_k,k},\ell_{N_k})\}\in\mathcal{F}(\mathbb{X}\times\mathbb{L})
\end{equation}
where each state \(x\in\mathbb{X}\) has been augmented with a label \(\ell\in\mathbb{L}\), where $\mathbb{X}$ and $\mathbb{L}$ are the single-object state space and label space respectively, and $\mathcal{F}(\mathbb{X}\times\mathbb{L})$ denotes all finite subsets of $\mathbb{X}\times\mathbb{L}$. At each \(k\), each sensor node $s$ of the network returns a set of measurements $Z_k^{(s)}=\{z_{1,k}^{(s)},\ldots,z_{m,k}^{(s)}\}\in\mathcal{F}(\mathbb{Z})$, where $\mathbb{Z}$ is the measurement space, that may include object detections and false alarms (clutter).

If a sensor is located at $(p_{x,s},p_{y,s})$, with bearing $\theta_s$, and an object is located at $(p_x,p_y)$, then the range and relative bearing of the object relative to the sensor are given by:
\begin{equation}
    \rho =\sqrt{\left(p_x-p_{x,s}\right)^2+\left(p_y-p_{y,s}\right)^2}
    \label{eq:range}
\end{equation}
\begin{equation}
    \theta =\mathrm{atan2}\left(p_x-p_{x,s},p_y-p_{y,s}\right)-\theta_s,
    \label{eq:bearing}
\end{equation}
where $\mathrm{atan2}$ means the four-quadrant inverse tangent function.

The probability of detection of an object with state $x=(\rho,\theta)$ by sensor $s$ at time step $k$ is denoted by $p_{D,k}^{(s)}(x)$ and is dependent on the time step, sensor state, and object state. The \textit{limits} of the sensor's FoV is mathematically formulated by incorporating these dependencies. An example is shown in Figure~\ref{fig:FoV_limits}. To model the probability of detection of an object decreasing when approaching the limits of the sensor's FoV, a sigmoid-based model is used. Using~\eqref{eq:range} and~\eqref{eq:bearing}:
\begin{equation}
        \setlength{\arraycolsep}{4pt}
    p_{D,k}^{(s)}(x) = \left\{
    \begin{array}{ll}
        p_{D,\rho}(\rho)\cdot p_{D,\theta}(\theta)
        & \text{if }|\theta|\leqslant \theta_{\max},\rho\leqslant\rho_{\max}\\
        0 & \text{else}
    \end{array}
    \right.
        \setlength{\arraycolsep}{6pt}
    \label{eq:pd_example}
\end{equation}
where
\begin{equation}
    p_{D,\rho}(\rho)=\frac{p_{D,\max}}{1+\mathrm{exp}\big(k_{\rho}\cdot(\rho-\rho_{\mathrm{max}})\big)}
\end{equation}
\begin{equation}
    p_{D,\theta}(\theta)=\frac{p_{D,\max}}{1+\mathrm{exp}\big(k_{\theta}\cdot(\lvert\theta\rvert-\theta_{\mathrm{max}})\big)}
\end{equation}
where $\rho_{\max}$ and $\theta_{\max}$ are the maximum range and relative bearing that will return a measurement, and $k_{\rho}$ and $k_{\theta}$ are constants that control the sharpness of the decrease.

\begin{figure}
    \centering
    \includegraphics[width=0.6\columnwidth]{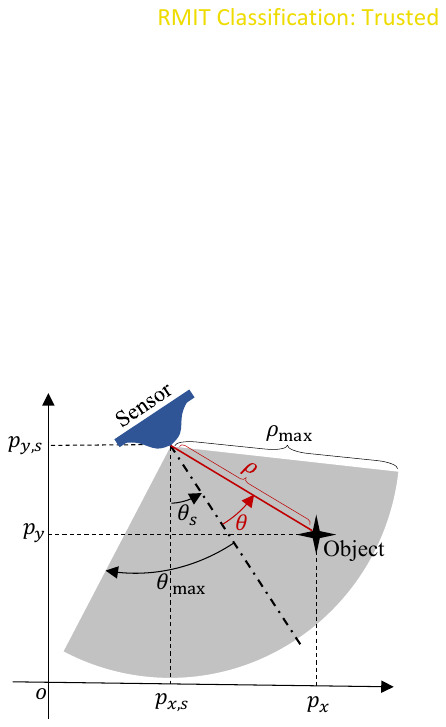}
    \caption{Diagram of a sensor with limited FoV, and the notation used for formulation of the probability of detection.}
    \label{fig:FoV_limits}
\end{figure}

Conditional on detection, the sensor $s$ is assumed to provide a measurement $z$ that is distributed according to a density given by the likelihood function \(g_{s}({\cdot|x,x_s})\), where $x$ is the detected object's state and $x_s$ denotes the sensor's state (e.g., its location $(p_{x,s},p_{y,s})$ and orientation $\theta_s$ as shown in Figure~\ref{fig:FoV_limits}). Each sensor can be controlled via a sensor control command $u^*_s\in\mathbb{U}$ where \(\mathbb{U}\) is a finite set of sensor control commands. Therefore, the sensor's state will depend on what control action is chosen and executed, hence the measurement likelihood values will be dependent on sensor control actions too. A multi-sensor control command can be constructed by appending multiple single-sensor control commands into a tuple, \(\mathfrak{u}=(u_1,\ldots,u_{\mathcal{S}})\in\mathbb{U}^{\mathcal{S}}\). 

The overarching aim of controlling a set of sensors for multi-object tracking is to substantially improve the accuracy of target state estimation. This encompasses refining both the detection of the total number of targets (cardinality) and the accurate assessment of each target's state.

\subsection{The multi-object filter}
As was emphasized previously, a multi-object filter runs onboard each sensor node. The Bayesian multi-object filter propagates the multi-object density through its prediction and update steps, and after the posterior is fused with other similar posteriors received from neighboring nodes, it extracts the multi-object estimate, which is our realization of situational awareness. The distributed multi-sensor control solution proposed in this paper can be applied with almost any choice of RFS-based multi-object filter. However, due to its proven performance and intuitive structure, we present and formulate our solution for applications where the LMB filter~\cite{reuter2015multiple} is locally running in each node. 

The LMB distribution is completely described by its components \({\bm{\pi}}=\{(r^{(\ell)},p^{(\ell)}(\cdot))\}_{\ell\in\mathbb{L}}\) where \(r^{(\ell)}\) is the probability of the existence of an object with label \(\ell\in\mathbb{L}\), and \(p^{(\ell)}(x)\) is the probability density of the object state conditional on its existence. Mathematically, for any multi-object state $\bm{X} = \{(x_1,\ell_1),\ldots,(x_n,\ell_n)\}$, the multi-object density is given by
\begin{equation}
    \bm{\pi}(\bm{X})=\Delta(\bm{X})\,\omega(\{\ell_1,\ldots,\ell_n\})\prod_{i=1}^{n} p^{(\ell_i)}(x_i)
\end{equation}
where
\begin{equation}
    \Delta(\bm{X})=\begin{cases}
        1 & \text{if }\lvert\bm{X}\rvert=\lvert\{\ell_1,\ldots,\ell_n\}\rvert\\
        0 & \text{otherwise}
    \end{cases}
\end{equation}
is the distinct label indicator and
\begin{equation}
    \omega(L)=
    \prod_{\ell\in L} r^{(\ell)}
    \prod_{\ell'\in (\mathbb{L} \setminus L)}(1-r^{(\ell')})
\end{equation}
is the probability of joint existence of all objects with labels \(\ell \in L\) and non-existence of all other labels. For computing single-object densities, a particle implementation is used in this paper, where the density of each LMB component with label \(\ell\) is approximated by \(J^{(\ell)}\) weights and particles, 
\begin{equation}
    p^{(\ell)}(x)\approx\sum_{j=1}^{J^{(\ell)}}w_j^{(\ell)}\delta(x-x_j^{(\ell)})
    \label{eq:particle_implementation}
\end{equation}
where \(\delta(\cdot)\) is the Dirac delta function. $\bm{\pi}_{k,s}$ denotes the local multi-object posterior density at sensor node $s$ of $\bm{X}_k$, the multi-object state at time $k$.

\begin{figure*}[t]
    \centering
    \includegraphics[width=\textwidth]{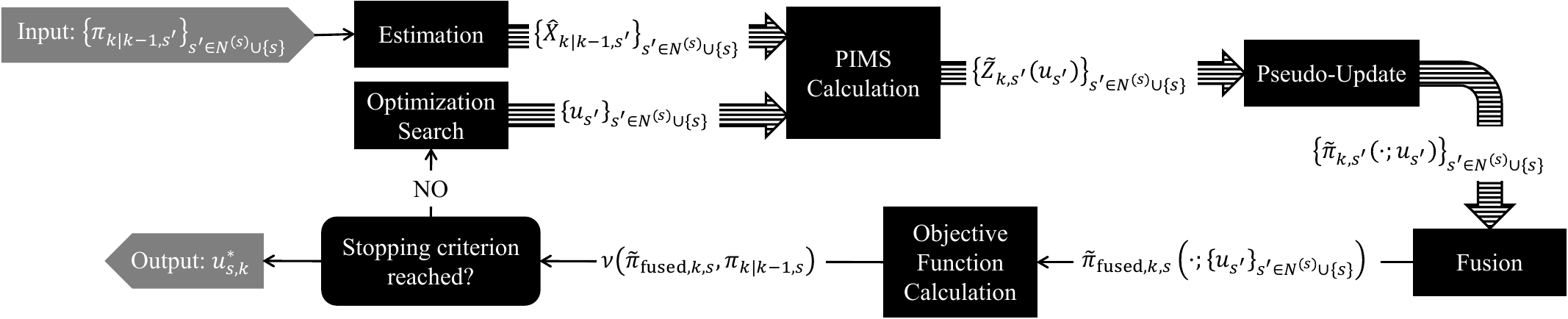}
    \caption{The proposed architecture of the contents of the multi-sensor control block executing at sensor node $s$.}
    \label{fig:proposed_architecture}
\end{figure*}

\begin{figure*}[t]
    \centering
    \begin{tabular}{ccc}
        \includegraphics[width=0.33\textwidth]{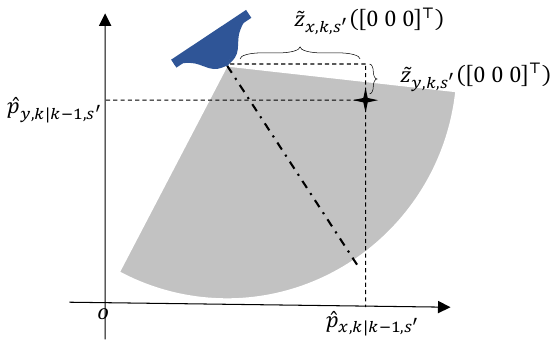}
        &
        \includegraphics[width=0.33\textwidth]{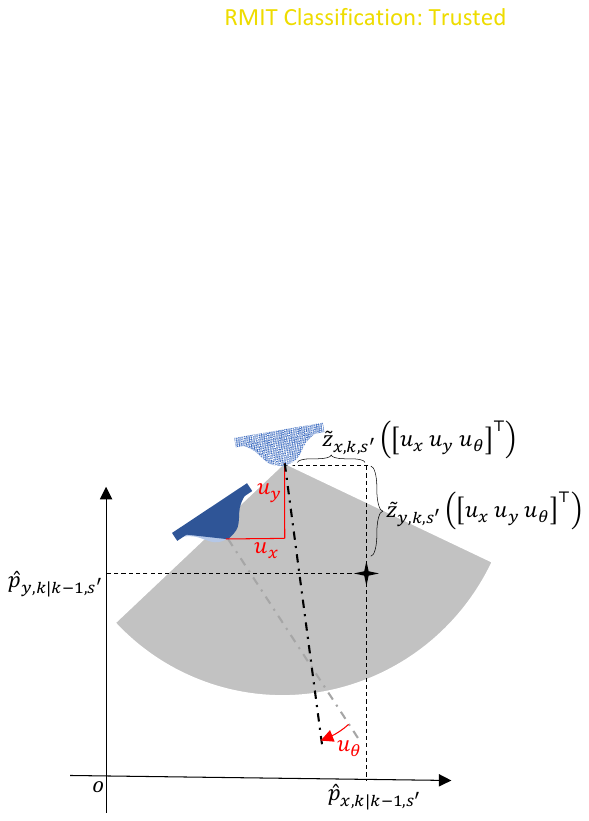}
        &
        \includegraphics[width=0.25\textwidth]{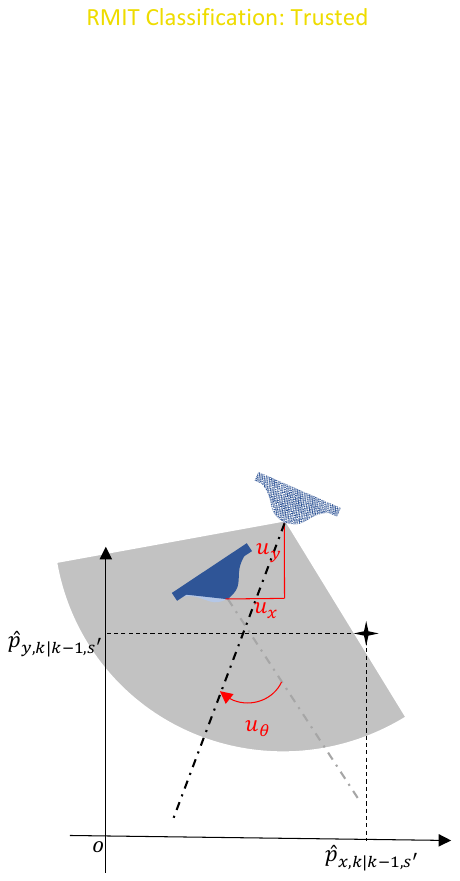}
        \\
        \footnotesize(a) & \footnotesize(b) & \footnotesize(c)
    \end{tabular}
    
    \caption{PIMS measurement returned for an object with hypothesized (a) zero control command, (b) a non-zero translation and rotation control command, and (c) the same translation as (b) but a larger rotation that leads to the object being missed.} 
    \label{fig:PIMS_examples}
\end{figure*}

\subsection{Distributed information fusion}
As shown in Figure~\ref{fig:overall_blockdiagram}, \textit{information fusion} is a critical task that needs to be completed in each sensor node. With limited FoVs, a \textit{complementary} fusion method that combines information from all the sensors is required. Klupacs et al.~\cite{ref:KlupacsFusion} recently introduced a fully distributed implementation of complementary information fusion for LMB densities that works with sensors with limited FoVs and handles label inconsistency and double counting, two significant challenges in distributed information fusion.

However, it has a critical flaw when used for multi-sensor control. When there is a large network of sensors with limited FoVs, it is likely that the following scenario will occur: if only a single sensor $s$ is observing a particular target $\ell$ and the other sensors in the network are positioned such that the target is not in their FoVs, then if $s$ no longer observes the target (either from the target moving out its FoV of, $s$ taking an action that moves the target out of its FoV, or target death), then after the update step the probability of existence at $s$, $r_{k,s}^{(\ell)}$ will decrease, however at the other sensors $s'\in\mathbb{S}\setminus \{s\}$, the probability of existence after update $r_{k,s'}^{(\ell)}$ will remain high. Due to the nature of the complementary fusion rule, this results in the fused probability of existence $\bar{r}_{k,s}^{(\ell)}$ remaining high, even when the target is not being observed. Therefore, the fused probabilities of existence will be practically identical across sensor actions, making it impossible to differentiate effective and ineffective sensor actions. To address this, a different fusion method must be used, which will be introduced in Section~\ref{subsec:ACF}.

\section{Distributed Multi-Sensor Control: The Proposed Solution}
\label{sec:Solution}

In this section, we present our proposed multi-sensor control solution for distributed multi-target tracking. Figure~\ref{fig:proposed_architecture} shows a block diagram of our proposed solution, in a form that could be the contents of the ``Multi-sensor Control'' block in Figure~\ref{fig:overall_blockdiagram}.

\subsection{Estimation}
The first step is to \textit{estimate} the number of objects and their states from the predicted densities at the local node $s$ and received from the other nodes $s'\in \mathbb{S}\setminus \{s\}$. We choose the EAP estimates. In the case of LMB densities, assuming that $\bm{\pi}_{k|k-1,s'}$ is parametrized as
$
\left\{\left(r_{k|k-1,s'}^{(\ell)}, \left\{\left(w_{j,k|k-1,s'},x_{j,k|k-1,s'}\right)\right\}_{j=1}^{J_{k-1,s'}^{(\ell)}}\right)\right\}_{\ell\in\mathbb{L}_{k-1,s'}}.
$
The EAP estimate for the number of objects (cardinality) is given by the sum of all probabilities of existence,
\begin{equation}
    \hat{|\bm{X}|}_{k|k-1,s'} = \sum_{\ell\in\mathbb{L}_{k-1,s'}} r_{k|k-1,s'}^{(\ell)}.
    \label{eq:card_EAP_estimate}
\end{equation}
The objects predicted to exist are then chosen as those with the highest probabilities of existence, up to $\hat{|X|}_{k|k-1,s'}$ objects. For each chosen label $\ell$, the EAP estimate for the state is:
\begin{equation}
    \label{eq:state_EAP_estimate}
    \hat{x}^{(\ell)}_{k|k-1,s'} = \sum_{j=1}^{J_{k-1,s'}^{(\ell)}} w_{j,k|k-1,s'}\,x_{j,k|k-1,s'}.
\end{equation}


\subsection{Calculation of the Predicted Ideal Measurement Set (PIMS) \& Pseudo-Update}
The next step is to compute hypothesized measurement sets that could ideally be returned by each sensor node, to evaluate the utility of potential multi-sensor control commands without actually performing the actions and obtaining real measurements. Such sets are \textit{ideal} in the sense that they include no measurement noise, false alarms, or miss-detections, and correspond to all objects estimated to exist. An example is shown in Figure~\ref{fig:PIMS_examples}. Consider an object that is predicted by sensor node $s'$ to exist at an estimated location of $(\hat{p}_{x,k|k-1,s'},\hat{p}_{y,k|k-1,s'})$. As shown in Figure~\ref{fig:PIMS_examples}(a), without any sensor action (zero control command), ideally the object would be detected, returning a measurement that is comprised of the horizontal and vertical distance from the object to the sensor, denoted by $\tilde{z} = (\tilde{z}_{x,k,s'},\tilde{z}_{y,k,s'})$. If a hypothesized control command in the form of a 2D translation and rotation is applied to the sensor, the measurements will change -- see Figure~\ref{fig:PIMS_examples}(b). Due to the sensor's limited field of view, it is also possible that a predicted object is entirely missed by the sensor -- see Figure~\ref{fig:PIMS_examples}(c).

Once the PIMS $\tilde{Z}_{k,s}$ is generated for each sensor node, a pseudo-update step is run. ``Pseudo'' refers to the hypothetical nature of the measurements generated with zero noise. The results are the multi-object posteriors that would be generated locally in each sensor node if the node's locally acquired measurements were the PIMS generated in node $s$ for them. The pseudo-posterior at time step $k$, at node $s$, using the PIMS generated at $s$, is denoted as $\tilde{\bm{\pi}}_{k,s}(\cdot)$.

\subsection{Adaptive Complementary Fusion}
\label{subsec:ACF}

To address the problems with complementary fusion outlined earlier, we propose an Adaptive Complementary Fusion (ACF) method that selectively includes sensors based on FoV constraints and the expected state of each target. Unlike Adaptive GCI fusion~\cite{ref:KlupacsDistributedWeightedGCI}, where a weighted sum of each sensors' contribution is used, complementary fusion is formulated using the union of each sensors' labeled random finite set, where for each label $\ell$, $Y^{(\ell)}=\bigcup_{s=1}^{\mathcal{S}}X_s^{(\ell)}$, which does not have a well-defined way to weight the contributions of each sensor~\cite{ref:ComplementaryFusion}. Instead, we propose restricting the set of sensors included in the union to just the sensors that contribute useful information. We now define the fused labeled random finite set for label $\ell$ to be the union over the set of active sensors for $\ell$,
\begin{equation}
    \mathbf{Y}_k^{(\ell)}=\displaystyle\bigcup_{s\in\mathcal{A}_k^{(\ell)}} \mathbf{X}_{k,s}^{(\ell)}.
\end{equation}

The set of active sensors for label $\ell$ is defined as
\begin{equation}
    \mathcal{A}_k^{(\ell)}=\left\{s\in\mathbb{S}:1_{\text{FoV}}^{(\ell)}(s,k)=1\right\},
\end{equation}
where
\begin{equation}
1_{\text{FoV}}^{(\ell)}(s,k)=\text{max}\left(1_{\text{update}}^{(\ell)}(s,k),1_{\text{prediction}}^{(\ell)}(s,k)\right)
\end{equation}
and
\begin{equation}
    1_{\text{update}}^{(\ell)}\left(s,k\right)=\begin{cases}
        1 & \text{if }p_{D,k}^{(s)}(\hat{x}_{k,s}^{(\ell)})>p_{D,\text{th}}\\
        0 & \text{otherwise}
    \end{cases}
\end{equation}
\begin{equation}
    1_{\text{prediction}}^{(\ell)}\left(s,k\right)=\begin{cases}
        1 & \text{if }p_{D,k}^{(s)}(\hat{x}_{k|k-1,s}^{(\ell)})>p_{D,\text{th}}\\
        0 & \text{otherwise},
    \end{cases}
\end{equation}
with $p_{D,\text{th}}$ as the threshold probability of detection. By substituting $\mathcal{A}_k^{(\ell)}$ for $\mathbb{S}$ in the complementary fusion rules~\cite{ref:ComplementaryFusion}, the Adaptive Complementary Fusion rules for LMB densities are:
\begin{equation}
    \bar{r}_{k}^{(\ell)}=\displaystyle\sum_{s\in\mathcal{A}_k^{(\ell)}}\frac{r_{k,s}^{(\ell)}}{1-r_{k,s}^{(\ell)}}\Bigg/\left[1+\displaystyle\sum_{s\in\mathcal{A}_k^{(\ell)}}\frac{r_{k,s}^{(\ell)}}{1-r_{k,s}^{(\ell)}}\right]
\end{equation}
\begin{equation}
    \bar{p}_{k}^{(\ell)}(\cdot)=\displaystyle\sum_{s\in\mathcal{A}_k^{(\ell)}}\frac{r_{k,s}^{(\ell)}p_{k,s}^{(\ell)}(\cdot)}{1-r_{k,s}^{(\ell)}}\Bigg/\displaystyle\sum_{s\in\mathcal{A}_k^{(\ell)}}\frac{r_{k,s}^{(\ell)}}{1-r_{k,s}^{(\ell)}}.
\end{equation}

This means that if either the updated or predicted target position is inside the FoV of sensor $s$, then $s$ will be included in that target's active sensor set. If the updated target position is inside a sensor's FoV, then that sensor has measured the target and will clearly contribute useful information for fusion. If the updated target position is not inside a sensor's FoV, either due to a miss-detection, target death, or the target has moved unexpectedly outside of the sensor's FoV, but the predicted target position is inside the sensor's FoV, then that is also useful information for fusion. If neither case is true, then the sensor was not expected to be able to measure the target and did not measure the target; therefore, it did not collect any useful information for fusion and is excluded from $\mathcal{A}_k^{(\ell)}$.

If $\lvert\mathcal{A}_k^{(\ell)}\rvert>1$, then all elements of $\mathcal{A}_k^{(\ell)}$ are fused using the ACF rules. If $\lvert\mathcal{A}_k^{(\ell)}\rvert=1$, then just a single sensor is active; therefore, fusion does not need to take place, and that sensor's density is used. The last case, $\mathcal{A}_k^{(\ell)}=\emptyset$, requires special handling. When no sensor in the network has measured target $\ell$ or was expected to measure $\ell$, discarding all Bernoulli components for $\ell$ would result in that target being dropped, which is not necessarily desired. Therefore, two different approaches are used, depending on the type of densities being fused. \textbf{Pseudo-update fusion}: If fusing pseudo-updates, then targets that are not observed or expected to be observed by any sensor should be discarded, as they do not provide any information for sensor control. Therefore, $\mathcal{A}_k^{(\ell)}$ is left empty and label $\ell$ is omitted from the fused density. \textbf{Update fusion}: If fusing updates, then targets should be retained even if they are not observed; for example, if a sensor may perform an action that results in a target no longer being observed, then that target should still be tracked. Therefore, all sensors are assumed to contribute equally and $\mathcal{A}_k^{(\ell)}=\mathbb{S}$.

Note that restricting the set of sensors that contribute to the fusion of target $\ell$ to the set $\mathcal{A}_k^{(\ell)}$ addresses the issue of delays in dropping targets after death when using complementary fusion. This issue is caused by the fused probability of existence of a target being high if just one of the sensors' probabilities of existence is high, e.g., when a sensor does not observe a target. Therefore, even if the death of a target $\ell$ is directly observed by some sensors in the network, their respective low probabilities of existence are 'overruled' by the sensors that were not observing that area and thus have high probabilities of existence. The proposed ACF method resolves this, quickly reducing a target's probability of existence after death.

The proposed ACF rules are combined with the distributed consensus approach, extended label space, and label merging from Klupacs' distributed fusion approach~\cite{ref:KlupacsFusion} in this work.
	
\subsection{Proposed objective function}
	
The pseudo-posterior densities $\{\tilde{\bm{\pi}}_{k,s}(\cdot):s\in\mathbb{S}\}$ and their fused density $\tilde{\bm{\pi}}_{\mathrm{fused},k,s}(\cdot)$ are clearly dependent on the pseudo-measurements which are themselves dependent on a hypothetical multi-sensor control command, $\tilde{\mathfrak{u}} = (\tilde{u}_1,\ldots,\tilde{u}_{\mathcal{S}})\in\mathbb{U}^{\mathcal{S}}$. The fused pseudo-posterior is input to an objective function, which is at the core of optimizing the sensing performance through sensor control. It returns a scalar value and should be formulated in such a way that its optimal value is associated with the \textit{best statistically expected} sensing performance.

In this work, we propose an information-theoretic objective function subject to two constraints based on practicability: avoiding collisions with the targets and avoiding collisions with the other sensors.

\subsubsection{Kullback-Leibler Divergence}

The Kullback-Leibler Divergence (KLD) is a statistical measure of how different a model distribution $Q$ is from a reference distribution $P$. For continuous distributions with probability density functions $p$ and $q$,

\begin{equation}
    D_{\text{KL}}(P\,||\,Q)=\int_{-\infty}^{\infty} p(x)\,\mathrm{log}\left(\frac{p(x)}{q(x)}
    \right)dx.
    \label{eq:KLD}
\end{equation}

Note that the `$||$’ symbol in~\eqref{eq:KLD} does not represent norm; it separates the two probability distributions $P$ and $Q$, indicating ``the divergence of Q from P’’. If the two distributions are instead RFS densities $\bm{\pi}_1$ and $\bm{\pi}_2$, then

\begin{equation}
        D_{\text{KL}}
        \Big(
        \bm{\pi}_1\,||\,\bm{\pi}_2
        \Big)
        =	
        \int_{\mathbb{X}} \,\bm{\pi}_{1}(X) 
        \,\log
        \left(
        \dfrac{\bm{\pi}_{1}(X)}{\bm{\pi}_{2}(X)}
        \right)\, \delta X.
    \label{eq:KLD_reward}
\end{equation}
where $\delta X$ denotes the set integral~\cite{ref:MahlerBook}. In the case that both densities are LMB densities, denoting them by:
$$
\bm{\pi}_{1} = 
    \left\{
        \left(r^{(\ell)}_1,p^{(\ell)}_1\right)
    \right\}_{\ell\in\mathbb{L}_1}
$$
$$
\bm{\pi}_2 = 
    \left\{
        \left(r^{(\ell)}_2,p^{(\ell)}_2\right)
    \right\}_{\ell\in\mathbb{L}_2}.
$$
Assuming label consistency, the integral has a closed form given by~\cite{ref:LMBParticle}:
\begin{equation}
\begin{split}
    &D_{\text{KL}}\Big(\bm{\pi}_1 \,||\, \bm{\pi}_2\Big) =
    \sum_{\ell\in\mathbb{L}_1\cup\mathbb{L}_2}
    \Bigg[
        r^{(\ell)}_1 \log\left(\frac{r^{(\ell)}_1}{r^{(\ell)}_2}\right)
        + \\ 
        &\,\,\quad\left[1-r^{(\ell)}_1\right]\log\left(\frac{1-r^{(\ell)}_1}{1-r^{(\ell)}_2}\right) + r^{(\ell)}_1 
        D_{\text{KL}}\left(p^{(\ell)}_1\,||\,p^{(\ell)}_2\right)
    \Bigg],
    \label{eq:LMB_reward}
\end{split}
\end{equation}
where the final term is the KLD between two spatial distributions,
\begin{equation}
    D_{\text{KL}}\left(p^{(\ell)}_1\,||\,p^{(\ell)}_2\right)=\int p_1^{(\ell)}(x)\log\left(\frac{p_1^{(\ell)}(x)}{p^{(\ell)}_2(x)}\right)dx.
    \label{eq:Bernoulli_KLD}
\end{equation}

For computational tractability and to focus on tracking a maximal number of targets, we omit the spatial component of~\eqref{eq:LMB_reward} and just consider the divergence in existence probabilities:
\begin{multline}
    D_{\text{KL}}^{\text{exist}}\Big(\bm{\pi}_1(\cdot) \,||\, \bm{\pi}_2(\cdot)\Big) =
    \sum_{\ell\in\mathbb{L}_1\cup\mathbb{L}_2}
    \Bigg[
        r^{(\ell)}_1 \log\left(\frac{r^{(\ell)}_1}{r^{(\ell)}_2}\right)
        + \\ 
        \left[1-r^{(\ell)}_1\right]\log\left(\frac{1-r^{(\ell)}_1}{1-r^{(\ell)}_2}\right)
    \Bigg].
    \label{eq:LMB_reward_exist}
\end{multline}

This formulation prioritizes actions that maximize information gain regarding target presence while maintaining computational efficiency for real-time multi-sensor control.


There are two important cases to consider. Firstly, when $\ell\in\mathbb{L}_2\setminus\mathbb{L}_1$, then $r_1^{(\ell)}=0$ the contribution of $\ell$ to~\eqref{eq:LMB_reward_exist} simplifies to:
\begin{equation}
    -\log\left(r_2^{(\ell)}\right).
    \label{eq:dropped_target}
\end{equation}

Secondly, when $\ell\in\mathbb{L}_1\setminus\mathbb{L}_2$, then $r_2^{(\ell)}=0$, which results in the term $r_1^{(\ell)}\log\left(r_1^{(\ell)}/\,r_2^{(\ell)}\right)$ in~\eqref{eq:LMB_reward_exist} becoming infinitely large. This is a reflection of the information gain when a new target is observed in $\bm{\pi}_1$ that was not observed in $\bm{\pi}_2$. However, in order to distinguish between multiple different actions that result in new target observations, we approximate the information gain by setting $r_2^{(\ell)}=\varepsilon$, where $\varepsilon$ is a small constant, resulting in the contribution of $\ell\in\mathbb{L}_1\setminus\mathbb{L}_2$ to~\eqref{eq:LMB_reward_exist} approximates to
\begin{equation}
        r^{(\ell)}_1 \log\left(\frac{r^{(\ell)}_1}{\varepsilon}\right)
        + \left[1-r^{(\ell)}_1\right]\log\left(\frac{1-r^{(\ell)}_1}{1-\varepsilon}\right)
        \label{eq:epsilon}
\end{equation}
which is a finite positive value, allowing actions that observe a higher number of new targets to have a higher value.

\subsubsection{Constraints}

In practical multi-object tracking scenarios, it is critical that sensors avoid collisions with both the targets that they are tracking and the other sensors in the network. We propose the following constraints:

Firstly, we add a constraint on actions that would move the sensor too close to the targets. The void probability functional (or void probability) of a point process is the probability that a given region does not contain any points~\cite{ref:VoidProbability}. In the multi-object tracking problem, it is the probability that a given region does not contain any targets. This can be used to prioritize sensor actions that do not move the sensor in such a way that targets enter an exclusion area around the sensor. This exclusion area $\epsilon(\tilde{u}_s)$, a circular region centered on the sensor with radius $\rho_\epsilon$, depends on the hypothetical action $\tilde{u}_s$ taken by the sensor. The void probability of a SMC-LMB density~\cite{ref:ConstrainedCSDControl} is defined as:
\begin{equation}
    \psi_{\tilde{\bm{\pi}}}\left(\epsilon(\tilde{u}_s)\right)=\displaystyle\prod_{\ell\in\mathbb{L}}\left(1-\tilde{r}^{(\ell)}\sum_{j=1}^{J^{(\ell)}}1_{\epsilon(\tilde{u}_s)}(\tilde{x}_{j}^{(\ell)})\tilde{w}_{j}^{(\ell)}\right)
    \label{eq:VoidProbability}
\end{equation}
where
\begin{equation}
    1_{\epsilon(\tilde{u}_s)}(\tilde{x}_{j}^{(\ell)})=\left\{
    \begin{array}{ll}
        1
        & \text{if }\tilde{x}_{j}^{(\ell)}\text{ is in region }\epsilon(\tilde{u}_s)
        \\
        0 & \text{else}.
    \end{array}
    \right.
\label{eq:Indicator}
\end{equation}

Because $\sum_{j=1}^{J_k^{(\ell)}}w_{j}^{(\ell)}=1$, for a particular label, if all the particles for a particular label are inside the exclusion region, then the term in~\eqref{eq:VoidProbability} becomes $(1-\tilde{r}^{(\ell)})$. If all the particles are outside the exclusion region, it becomes $1$.

For a network of sensors with an associated hypothetical multi-sensor control command $\tilde{\mathfrak{u}}$, we define the multi-sensor sensor-target collision-avoidance term $\psi\left(\tilde{\mathfrak{u}}\right)$ as:
\begin{equation}
    \psi_{\tilde{\bm{\pi}}}\left(\tilde{\mathfrak{u}}\right)=\underset{\tilde{u}_s\in\tilde{\mathfrak{u}}}{\text{max}}\,\psi_{\tilde{\bm{\pi}}}\left(\epsilon\left(\tilde{u}_s\right)\right).
    \label{eq:SensorTargetCollision}
\end{equation}

To avoid collision with other sensors, we also define a sensor-sensor collision-avoidance term $\eta(\tilde{\mathfrak{u}})$ that is equal to the minimum distance between sensors in the network after hypothetical actions $\tilde{\mathfrak{u}}$ are taken:

\begin{equation}
    \eta(\tilde{\mathfrak{u}})=\min_{s,s'\in \mathbb{S},s\neq s'}d(s,s';\tilde{\mathfrak{u}}),
    \label{eq:collision}
\end{equation}

where

\begin{multline}
    d(s,s';\tilde{\mathfrak{u}})=\\ \sqrt{\Big(p_{x,s}(\tilde{u}_s)-p_{x,s'}(\tilde{u}_{s'})\Big)^2+\Big(p_{y,s}(\tilde{u}_s)-p_{y,s'}(\tilde{u}_{s'})\Big)^2}
\end{multline}
is the distance between sensors $s$ and $s'$ after $\tilde{\mathfrak{u}}$.

\subsubsection{Proposed Objective Function}

The multi-sensor control objective function is defined as the reward associated with choosing a hypothetical multi-sensor control command $\tilde{\mathfrak{u}}$, quantified by computing the KLD from a pseudo-posterior density that depends on that hypothetical multi-sensor control command, $\bm{\pi}_1(\cdot;\tilde{\mathfrak{u}})=\tilde{\bm{\pi}}_{k,s}(\cdot;\tilde{\mathfrak{u}})$, to the predicted density, $\bm{\pi}_2(\cdot)=\bm{\pi}_{k|k-1,s}(\cdot)$. However, as shown in~\eqref{eq:dropped_target}, when a target $\ell$ is found in $\bm{\pi}_{k|k-1,s}(\cdot)$ but not in $\tilde{\bm{\pi}}_{k,s}(\cdot;\tilde{\mathfrak{u}})$, this adds a finite positive component to the objective score, which rewards actions that result in targets being dropped. To counteract this, we add a penalty when a label is present in the prediction but not in the pseudo-posterior to instead penalize actions that result in targets being dropped, which is formulated as:
\begin{equation}
    \phi\left(\bm{\pi}_1,\bm{\pi}_2\right)=-\lambda\displaystyle\sum_{\ell\in\mathbb{L}_2\setminus\mathbb{L}_1}\mathrm{log}\left(r_2^{(\ell)}\right)
    \label{eq:penalty}
\end{equation}
where $\lambda\geq1$ is a scaling factor that ensures that for a target $\ell\in\mathbb{L}_2\setminus\mathbb{L}_1$, the sum of its component in the KLD, given by~\eqref{eq:LMB_reward_exist}, and its component in~\eqref{eq:penalty} is non-positive.

The final multi-sensor control objective function is defined as the sum of the KLD between the pseudo-posterior density and prediction density~\eqref{eq:LMB_reward_exist} and the penalty function~\eqref{eq:penalty},

\begin{equation}
\label{eq:our_objective_fun}
    \nu\left(\bm{\pi}_1(\cdot;\tilde{\mathfrak{u}}),\bm{\pi}_2\right)= \, D_{\text{KL}}^{\text{exist}}\big(\bm{\pi}_1(\cdot;\tilde{\mathfrak{u}}) , \bm{\pi}_2\big)-\phi\left(\bm{\pi}_1\left(\cdot;\tilde{\mathfrak{u}}\right),\bm{\pi}_2\right),
\end{equation}

The optimization problem is to find the multi-sensor control command $\mathfrak{u}^*$ that maximizes~\eqref{eq:our_objective_fun} constrained by~\eqref{eq:SensorTargetCollision} and~\eqref{eq:collision}:

\begin{equation}
\label{eq:final_objective_fun}
\begin{aligned}
    \tilde{\mathfrak{u}}=&\,\underset{\mathfrak{u}\in\mathbb{U}^{\mathcal{S}}}{\text{arg\,max}}\: \nu\left(\bm{\pi}_1(\cdot;\mathfrak{u}),\bm{\pi}_2\right) \\
    \text{subject to} \quad
    & \psi_{\bm{\pi}_1}(\mathfrak{u})<\psi_{\text{th}} \\
    & \eta(\mathfrak{u}) > \eta_{\text{th}}.
\end{aligned}
\end{equation}

\subsection{Optimization and stopping criterion}
	
A different choice of the multi-sensor control command, $\mathfrak{u}$, would generate a different set of fused pseudo-updated probabilities of existence, and therefore a different value for the reward function~\eqref{eq:final_objective_fun}. The ultimate aim is to find the optimal multi-sensor control command at each sensor node that would return \textit{maximum} reward. Denoting the selected optimal command by $\mathfrak{u}^* = \{u_{s}^*:s\in \mathbb{S}\}$, each node $s$ should execute the single-sensor control command $u_s^*$ to move the sensor before actual measurement acquisition is performed.

The process of determining the solution for the multi-sensor control problem is a combinatorial optimization search, requiring an iterative optimization algorithm to avoid a high computational cost. The recently proposed distributed coordinate descent method~\cite{ref:AidanICCAIS} only considers a sensor's neighborhood when calculating the multi-sensor control command, rather than the entire sensor network, significantly reducing computational cost through reducing dimensionality and parallel optimization at each sensor node. While this method works well in simple scenarios with significantly overlapping FoVs, it is not a truly distributed implementation and is thus susceptible to selecting globally suboptimal sensor actions.

We propose a fully distributed multi-sensor control algorithm, FDCD-SC. In FDCD-SC, each sensor node will iteratively update its own selected action, taking into account the actions that every other sensor has selected. In a distributed network, two sensors cannot know each other's selected action in real-time. But they can use the flooding method of inter-node communication~\cite{ref:flooding2017, ref:flooding2020} to \textit{flood-in} the information related to the action most recently selected by others, and \textit{flood-out} their own selection to others for future use in their own asynchronous time. 

After an initialization step that occurs simultaneously across all sensors, a multi-agent coordinate descent optimization process occurs. Firstly, as the multi-agent coordinate descent process is iterative, let us denote the iteration counter for this process by $t$, in contrast to filtering iterations denoted by $k$. The optimal single-sensor control command calculated at sensor $s$ in iteration $t$ is denoted as $u_s^{(t)}$. Initially, each sensor $s\in\mathbb{S}$ will perform a single-sensor control command optimization,
\begin{equation}
    u_s^{(0)} = \underset{\tilde{u}\in\mathbb{U}}{\text{arg\,max}} \:
    \nu\left(\tilde{\bm{\pi}}_{k,s}(\cdot;\tilde{u}), \bm{\pi}_{k|k-1,s}\right),
    \label{eq:initial_update}
\end{equation}
that finds the best hypothetical single-sensor control command for sensor $s$, $u_s^{(0)}$, that maximizes the KLD between the local pseudo-posterior $\tilde{\bm{\pi}}_{k,s}$ given $u_s^{(0)}$ and the local prior density $\bm{\pi}_{k|k-1,s}$. Each sensor then broadcasts the selected single-sensor control command $u_s^{(0)}$ and it's associated pseudo-posterior $\tilde{\bm{\pi}}_{k,s}(\cdot;u_s^{(0)})$ to the other sensors in the network.

Then, in subsequent iterations $t>0$, the information received from the other sensors in the network via distributed flooding communication can be utilized. In each iteration, for each sensor $s\in\mathbb{S}$ sequentially, a multi-sensor control command optimization is performed to calculate the best action for sensor $s$, given the other sensors in the network $s'\in\mathbb{S}\setminus\{s\}$ have fixed control commands $\{u_{s'}^{(t-1)}\},s'\in\mathbb{S}\setminus \{s\}$. Therefore, in each iteration $t>0$, each sensor sequentially performs the multi-sensor control command optimization,
\begin{equation}
    u_s^{(t)} = \underset{\tilde{u}\in\mathbb{U}}{\text{arg\,max}} \:
    \nu\left(\tilde{\bm{\pi}}_{\text{fused},k,s}\left(\cdot;\tilde{\mathfrak{u}}_s^{(t)}\right), \bm{\pi}_{k|k-1,s}\right),
    \label{eq:flooding_update}
\end{equation}
where
\begin{equation}
    \tilde{\mathfrak{u}}_s^{(t)}=\left[u_{1:(s-1)}^{(t)},\tilde{u},u_{(s+1):\mathcal{S}}^{(t-1)}\right]
    \label{eq:multi_sensor_definition}
\end{equation}
and $u^{(t)}_{1:(s-1)}$ denotes the $(s-1)$-tuple of the most recent decisions made by sensor nodes $1:(s-1)$, and a similar definition applies to $u^{(t-1)}_{(s+1):\mathcal{S}}$, i.e.,
\begin{equation}
    \begin{array}{rcl}
        u^{(t)}_{1:(s-1)} & \triangleq & \bigg(u^{(t)}_1,\ldots,u^{(t)}_{s-1}\bigg) \\
        u^{(t-1)}_{(s+1):\mathcal{S}} & \triangleq & \bigg(u^{(t-1)}_{s+1},\ldots,u^{(t-1)}_{\mathcal{S}}\bigg),
    \end{array}
\end{equation}
noting that the sensors $1$ to $s-1$ have already calculated and broadcast their optimal control commands for iteration $t$. Note that ACF ensures label consistency between local predictions and local/fused (pseudo-)posteriors.

Equation~\eqref{eq:flooding_update} finds the best hypothetical single-sensor control command for sensor $s$, $\tilde{u}_s$, that maximizes the KLD between the fused pseudo-posterior $\tilde{\bm{\pi}}_{\mathrm{fused},k,s}$ given the multi-sensor control command $\tilde{\mathfrak{u}}_s^{(t)}$ as defined in~\eqref{eq:multi_sensor_definition} and the local prior density $\bm{\pi}_{k|k-1,s}$. An important note to make is that solving the optimization problems in~\eqref{eq:initial_update} and~\eqref{eq:flooding_update} by exhaustive search is not computationally prohibitive. The main reason is that it involves computing the reward for just the finite number of possible single-sensor commands.


After the optimal control command for sensor node $s$ at iteration $t$ has been calculated, sensor $s$ has a new optimal multi-sensor control command,
\begin{equation}
    \mathfrak{u}_s^{(t)}=\left[u_{1:(s-1)}^{(t)},u_s^{(t)},u_{(s+1):\mathcal{S}}^{(t-1)}\right]
    \label{eq:local_optimal_multi_sensor_definition}
\end{equation}
and then broadcasts its locally decided optimal single-sensor control command $u_s^{(t)}$ and local pseudo-posterior $\tilde{\bm{\pi}}_{k,s}(\cdot;u_s^{(t)})$ throughout the network. Therefore, at each iteration $t$, each sensor $s$ will have received all of the most recently calculated optimal decisions and pseudo-posteriors from all other sensors $s' \in \mathbb{S}\setminus\{s\}$. This flooding process typically requires multiple communication iterations to ensure complete information dissemination throughout the network, particularly in large or sparsely connected sensor deployments. For LMB densities, the communication cost per sensor is known from~\cite{ref:LMBCommunication} to be $4\left(1+\left(4+10\lvert\mathbb{L}_{k,s}^{(t)}\rvert\right)\right)$ bytes, where $\lvert\mathbb{L}_{k,s}^{(t)}\rvert$ represents the cardinality of the label set at time step $k$ and iteration $t$ for sensor $s$. As the selection action is simply an integer, the total communication cost for transmitting both the selected action and the parameters defining the pseudo-posterior density is
\begin{equation}
    C_s^{(t)}=4\left(1+\left(4+10\lvert\mathbb{L}_{k,s}^{(t)}\rvert\right)\right)+1.
\end{equation}

The exact number of flooding iterations depends on the network topology and connectivity, but eventually converges to a state where all sensors have access to the updated single-sensor control command, i.e., each sensor will have access to $u_s^{(t)}$. The proposed method has a well-defined stopping criterion, where convergence of a sensor's multi-sensor control command~\eqref{eq:local_optimal_multi_sensor_definition} is mathematically guaranteed when it occurs, detailed in the following proof. 


\begin{theorem}
    In the FDCD-SC algorithm, at any sensor $s\in\mathbb{S}$, the optimal multi-sensor control command $\mathfrak{u}_s^{(t)}$ must eventually enter a cycle.
\end{theorem}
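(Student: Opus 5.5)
The argument is a finiteness-plus-determinism (pigeonhole) argument on the sequence of multi-sensor commands, treating one filtering step $k$ as fixed. The plan has three steps.

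\textbf{Step 1: finite state space.} Fix the filtering time $k$. Then all inputs to the FDCD-SC iteration are fixed: the predicted densities $\bm{\pi}_{k|k-1,s'}$, the PIMS, and hence every pseudo-posterior $\tilde{\bm{\pi}}_{k,s'}(\cdot;u)$ for $u\in\mathbb{U}$. Every iterate $\mathfrak{u}_s^{(t)}$ in~\eqref{eq:local_optimal_multi_sensor_definition} lies in $\mathbb{U}^{\mathcal{S}}$. Since $\mathbb{U}$ is finite, this set has at most $|\mathbb{U}|^{\mathcal{S}}$ elements.

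\textbf{Step 2: deterministic update map.} I will show that one full sweep of coordinate descent is a deterministic function $F:\mathbb{U}^{\mathcal{S}}\to\mathbb{U}^{\mathcal{S}}$ with $\mathfrak{u}^{(t)}=F(\mathfrak{u}^{(t-1)})$. Here $\mathfrak{u}^{(t)}=(u_1^{(t)},\ldots,u_{\mathcal{S}}^{(t)})$ is the full command once sensor $\mathcal{S}$ has finished iteration $t$.
\begin{itemize}
\item Sensor $1$ computes $u_1^{(t)}$ by~\eqref{eq:flooding_update}. This depends only on $u_{2:\mathcal{S}}^{(t-1)}$ and the fixed densities, because ACF and the objective $\nu$ are deterministic functions of the pseudo-posteriors.
\item Sensor $s$ then depends only on $u^{(t)}_{1:(s-1)}$ and $u^{(t-1)}_{(s+1):\mathcal{S}}$. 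By induction on $s$, the whole sweep depends only on $\mathfrak{u}^{(t-1)}$.
\item Ties in the $\arg\max$ are broken by a fixed deterministic rule, for example the smallest index in $\mathbb{U}$. This is needed so that $F$ is well defined.
\end{itemize}
Because $\mathfrak{u}_s^{(t)}=(u^{(t)}_{1:s},u^{(t-1)}_{(s+1):\mathcal{S}})$, it is a deterministic function of $\mathfrak{u}^{(t-1)}$. So it suffices to show that the sequence $\mathfrak{u}^{(t)}$ cycles.

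\textbf{Step 3: pigeonhole.} Iterating a map on a finite set of size $M=|\mathbb{U}|^{\mathcal{S}}$, among $\mathfrak{u}^{(0)},\ldots,\mathfrak{u}^{(M)}$ two must coincide, say $\mathfrak{u}^{(t_1)}=\mathfrak{u}^{(t_2)}$ with $t_1<t_2$. Determinism of $F$ then gives $\mathfrak{u}^{(t_1+j)}=\mathfrak{u}^{(t_2+j)}$ for all $j\ge 0$. The sequence is therefore eventually periodic with period dividing $t_2-t_1$, and so is $\mathfrak{u}_s^{(t)}$ at every $s$.

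A fixed point is the case of period $1$. This is exactly the stopping criterion: once $\mathfrak{u}_s^{(t)}=\mathfrak{u}_s^{(t-1)}$, it stays fixed forever.

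\textbf{Main obstacle: the distributed setting.} Each sensor must use exactly the prescribed mix of iteration-$t$ and iteration-$(t-1)$ commands. I will rely on the flooding assumption stated before the theorem: dissemination completes before each sensor's turn, so $u^{(t)}_{1:(s-1)}$ is available to sensor $s$. If that assumption were relaxed, the state would have to be augmented with the finite information each sensor holds. This keeps the state space finite, and the same pigeonhole argument still applies provided the communication schedule is deterministic.

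A secondary point is feasibility under~\eqref{eq:final_objective_fun}. If no feasible action exists, the $\arg\max$ must still return a deterministic default, such as the zero command, so that $F$ remains well defined.
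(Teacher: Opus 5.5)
Your proposal is correct and is essentially the paper's argument: $\mathbb{U}^{\mathcal{S}}$ is finite, the coordinate-descent update is deterministic, and the pigeonhole principle forces a repetition and hence a cycle. The differences are refinements rather than a different route. You iterate the full-sweep map on $\mathfrak{u}_{\mathcal{S}}^{(t)}$ and then pass to $\mathfrak{u}_s^{(t)}$, whereas the paper asserts determinism of $\mathfrak{u}_s^{(t)}\mapsto\mathfrak{u}_s^{(t+1)}$ directly. You also make the tie-breaking and flooding assumptions explicit, and your range $0,\ldots,M$ supplies the $M+1$ terms the pigeonhole needs, whereas the paper's range $1<t'<t\le|\mathbb{U}|^{\mathcal{S}}+1$ contains only $|\mathbb{U}|^{\mathcal{S}}$ indices.
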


\begin{proof}
    The optimal multi-sensor control command $\mathfrak{u}_s^{(t)}$ is composed of optimal single-sensor control commands $u_i^{(t)}$ from sensors $\{i\in\mathbb{S}:i\leq s\}$ and $u_j^{(t-1)}$ from sensors $\{j\in\mathbb{S}:j>s\}$, as shown in~\eqref{eq:local_optimal_multi_sensor_definition}.

    There are a finite number of possible single-sensor control commands, with a state space of $\lvert\mathbb{U}\rvert$, therefore there are a finite number of possible multi-sensor control commands $\mathfrak{u}_s^{(t)}$, with a state space of $\lvert\mathbb{U}\rvert^{\mathcal{S}}$.
    
    In each iteration $t>0$, at each sensor in the network $s'\in\mathbb{S}$, the optimal single-sensor control command is calculated according to~\eqref{eq:flooding_update}, which is a deterministic process. Therefore, the update from $\mathfrak{u}_s^{(t)}$ to $\mathfrak{u}_s^{(t+1)}$ as in~\eqref{eq:local_optimal_multi_sensor_definition} is also deterministic.

    By the Pigeonhole Principle, there must exist iterations $t'$ and $t$ with $1<t'<t\leq\lvert\mathbb{U}\rvert^{\mathcal{S}}+1$ such that $\mathfrak{u}_s^{(t')}=\mathfrak{u}_s^{(t)}$.

    Due to the deterministic update rule, this implies that $\mathfrak{u}_s^{(t'+l)}=\mathfrak{u}_s^{(t+l)}$ for all $l\geq 0$, establishing a cycle of a maximum length of $t_{\text{cycle}}=t-t'$.

    Therefore, the FDCD-SC algorithm must eventually enter into a cycle, completing the proof.
\end{proof}

If the cycle has a length of 1, then the algorithm has converged to a single optimal solution. If the cycle has a length greater than 1, then the algorithm has converged to a limit cycle between multiple solutions. In either case, the stopping criterion is set to be whenever a cycle is first detected:
\begin{equation}
    \text{Stop at iteration}~t~\text{if}~\exists\, t' \in \{1,2,\ldots,t-1\} : \mathfrak{u}_s^{(t)} = \mathfrak{u}_s^{(t')}
    \label{eq:stopping_criterion}
\end{equation}
The sensor where the stopping criterion is met is denoted $s^*$.

In practice, once the convergence required by the stopping criterion~\eqref{eq:stopping_criterion} is met at iteration $t_{\text{end}}$ with a cycle starting at $t_{\text{start}}$, the final optimal multi-sensor control command needs to be selected. There are multiple possible strategies that can be used to select which of the multi-sensor control commands proposed at each coordinate descent iteration, i.e. $\left\{\mathfrak{u}_{s^*}^{(t)}:t\in[t_{\text{start}},t_{\text{end}}]\right\}$, will be executed. In this work, at each coordinate descent iteration $t$, each sensor $s$ will calculate and store the objective score given the newly calculated multi-sensor control command $\mathfrak{u}_s^{(t)}$ from~\eqref{eq:local_optimal_multi_sensor_definition},
\begin{equation}
    \bar{\nu}_{s}^{(t)}=\nu\left(\tilde{\bm{\pi}}_{\text{fused},k,s}\left(\mathfrak{u}_s^{(t)}\right),\bm{\pi}_{k|k-1,s}\right).
\end{equation}.
When sensor $s^*$ reaches the stopping criterion, it will find the coordinate descent iteration within the cycle that has the highest associated objective score, i.e.,
\begin{equation}
    t^*=\underset{t\in[t_{\text{start}},t_{\text{end}}]}{\text{arg\,max}}\bar{\nu}_{s^*}^{(t)}.
\end{equation}

Finally, the optimal multi-sensor control command will be set as $\mathfrak{u}_k^*=\mathfrak{u}_{s^*}^{(t^*)}$, that sensor $s^*$ will \textit{flood-out} to all other sensors in the network, and each sensor will execute their optimal single-sensor control command $u_{k,s}^*$.

\begin{figure}[t]
   \centering
   \includegraphics[width=0.8\columnwidth]{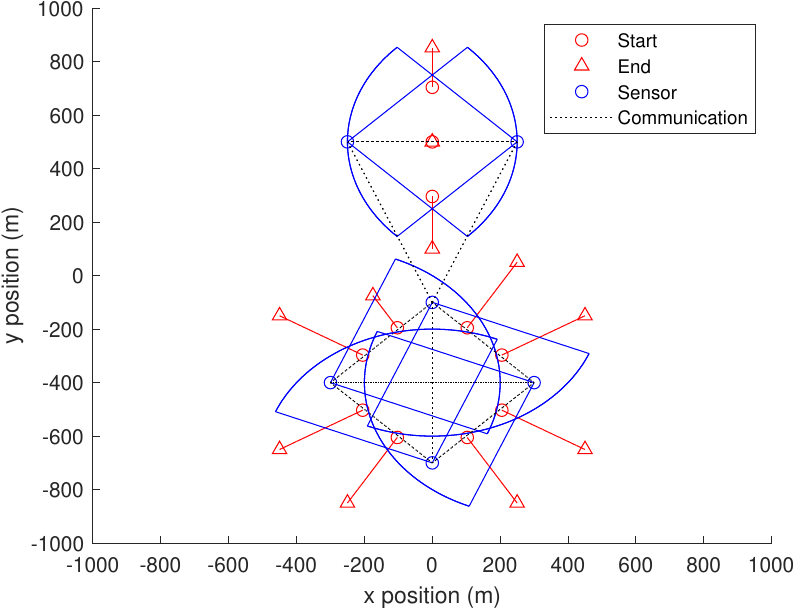}
   \caption{Overview of the 6 sensor scenario.}
   \label{fig:6sensor_scenario}
\end{figure}

\section{Numerical Experiments}
\label{sec:Experiment}
\subsection{Scenarios}

\begin{figure}[t]
   \centering
   \includegraphics[width=0.9\columnwidth]{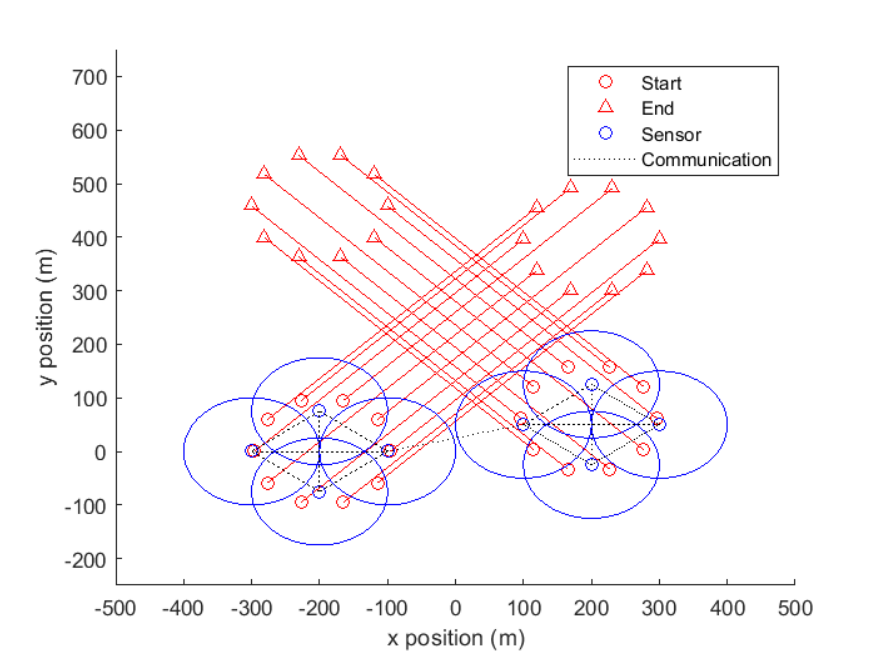}
   \caption{Overview of the 8 sensor scenario.}
   \label{fig:8sensor_scenario}
\end{figure}

Multiple challenging multi-object tracking scenario were developed to evaluate the performance of the FDCD-SC method. Figure~\ref{fig:6sensor_scenario} illustrates the first scenario, with 6 sensors tracking 11 targets for a total of 50 time steps with a period of 1 second. The targets are distributed within a $1000\,\text{m}\times 2000$\,m area and follow linear trajectories. All of the targets are present at the first time step and are observed by at least one sensor; however, as the scenario progresses, the targets will move away from the center of the scenario, following a linear motion model, and therefore out of the initial sensor FoVs. This dynamic nature of the multi-target tracking problem is handled by the complementary information fusion between sensors, so that targets aren't dropped when they move out of one sensor's FoV but remain in other sensors' FoVs, and they aren't double-counted when multiple sensors observe the same target. Additionally, two of the targets will die prematurely.

The detection profile of each sensor is modeled by~\eqref{eq:pd_example}, in the form depicted in Figure~\ref{fig:FoV_limits}, with $\theta_\max = 45^\circ$ (90$^\circ$ field of view), a maximum detection range of $\rho_\max = 500$\,m, and constants $p_{D,\mathrm{max}}=0.99$, $k_{\rho}=0.5$, and $k_{\theta}=20$. The maximum communication range between sensors is simulated to be 800\,m. Each sensor has 3 possible actions: remain stationary, rotate $22.5^\circ$ clockwise, and rotate $22.5^\circ$ anticlockwise.

Figure~\ref{fig:8sensor_scenario} illustrates the second, more challenging, scenario, with 8 sensors tracking 20 targets in a $800\,\text{m}\times 800$\,m area for a total of 100 time steps with a period of 1 second. In this scenario, there are two circular groups of targets that are each following a linear motion model, with trajectories such that the two target groups are initially separate, with four sensors tracking each group, followed by a period with significant overlap between the two groups, with a final period of separation again. The maximum detection range of the sensors is set to $\rho_{\max}=100$\,m, with no angular FoV restrictions, $p_{D,\mathrm{max}}=0.99$, $k_{\rho}=0.5$, and the maximum communication range between sensors is set at 300\,m. In the second scenario, each sensor has 9 possible actions: remain stationary, move 15\,m east, move 15\,m north-east, move 15\,m north, move 15\,m north-west, move 15\,m west, move 15\,m south-west, move 15\,m south, and move 15\,m south-east.

\begin{figure}[t]
   \centering
   \includegraphics[width=0.8\columnwidth]{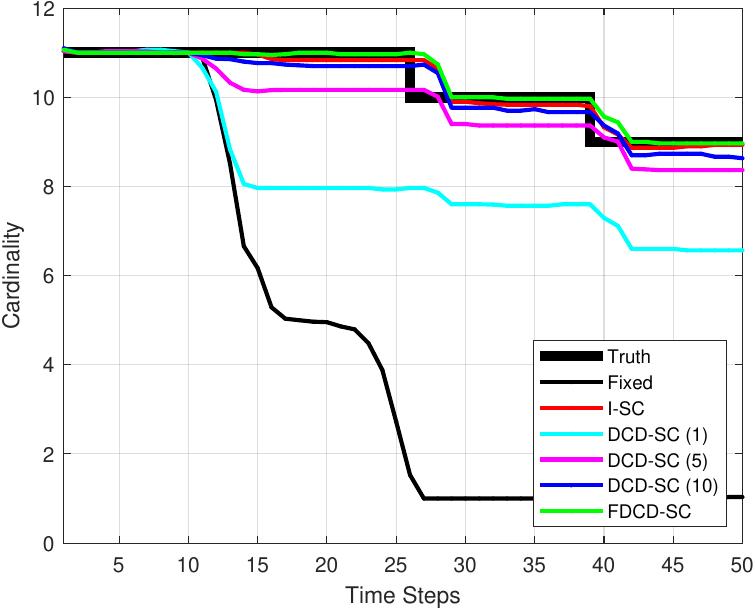}
   \caption{The average cardinality error of the fixed sensors, I-SC, DCD-SC, and FDCD-SC methods for the first scenario. The true number of targets is indicated by the thin black line, and the average estimated numbers of targets are shown by the colored lines.}
   \label{fig:6cams_cardinality_error}
\end{figure}

\begin{figure}[t]
   \centering
   \includegraphics[width=0.8\columnwidth]{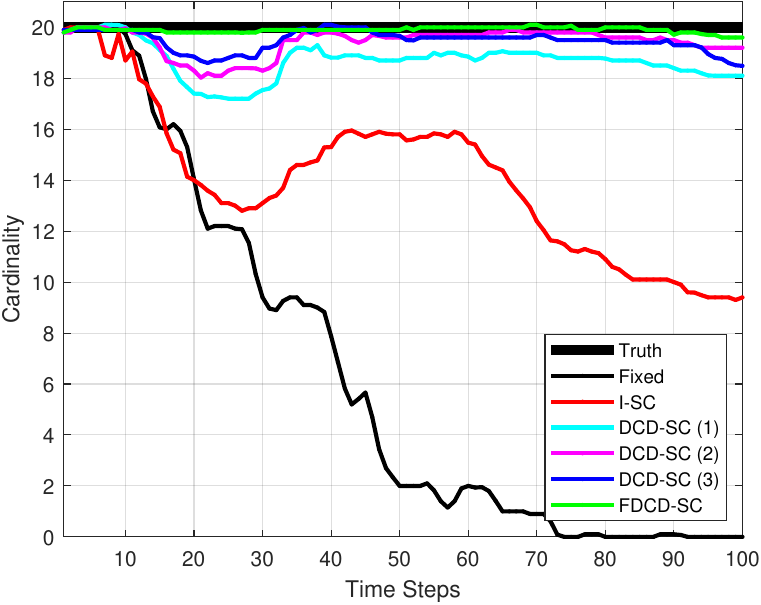}
   \caption{The average cardinality error of the fixed sensors, I-SC, DCD-SC, and FDCD-SC methods for the second scenario.}
   \label{fig:8cams_cardinality_error}
\end{figure}

\begin{table}[t]
    \centering
    \caption{Average multi-object tracking accuracy in scenario 1 of fixed sensors, I-SC, DCD-SC, and FDCD-SC.}
    \begin{tabular}{l|ccc}
    	\small
        \small\textbf{Control method} & \small\textbf{OSPA\,(m)} & \small\textbf{OSPA$^{(2)}$\,(m)} & \small\textbf{Time\,(s)} \\
        \hline
        \small Fixed Sensors & \small 60.36 & \small 48.97 & \small -- \\
        \small I-SC & \small 3.94 & \small 5.35 & \small 0.06 \\
        \small DCD-SC (1 run) & \small 22.39 & \small 21.34 & \small 0.10 \\
        \small DCD-SC (5 runs) & \small 8.36 & \small 8.65 & \small 0.43 \\
        \small DCD-SC (10 runs) & \small 5.54 & \small 7.34 & \small 0.83 \\
        \small \textbf{FDCD-SC} & \small \textbf{3.39} & \small \textbf{4.86} & \small \textbf{0.74} \\
    \end{tabular}
    \label{tab:6_sensor_results}
\end{table}

\begin{table}[t]
    \centering
    \caption{Average multi-object tracking accuracy in scenario 2 of fixed sensors, I-SC, DCD-SC, and FDCD-SC.}
    \begin{tabular}{l|ccc}
    	\small
        \small\textbf{Control method} & \small\textbf{OSPA\,(m)} & \small\textbf{OSPA$^{(2)}$\,(m)} & \small\textbf{Time\,(s)} \\
        \hline
        \small Fixed Sensors & \small 68.58 & \small 70.76 & \small -- \\
        \small I-SC & \small 31.82 & \small 46.44 & \small 0.54 \\
        \small DCD-SC (1 run) & \small 8.13 & \small 23.40 & \small 3.31 \\
        \small DCD-SC (2 runs) & \small 5.12 & \small 20.85 & \small 7.44 \\
        \small DCD-SC (3 runs) & \small 4.16 & \small 13.35 & \small 13.75 \\
        \small \textbf{FDCD-SC} & \small \textbf{1.73} & \small \textbf{5.87} & \small \textbf{12.80} \\
    \end{tabular}
    \label{tab:8_sensor_results}
\end{table}

\subsection{Methods}
The proposed Fully Distributed Coordinate Descent method (FDCD-SC) for multi-sensor control, with the constraint parameters from~\eqref{eq:final_objective_fun} set to $\psi_{\text{th}}=0.8$ and $\eta_{\text{th}}=50\,\text{m}$, $\varepsilon$ from~\eqref{eq:epsilon} set to $1e-6$ and $\lambda$ from~\eqref{eq:penalty} set to $100$, is evaluated and compared against several other approaches. Sensors with fixed positions and orientations are examined to establish baseline results. An individual sensor control method, denoted I-SC, where each sensor independently finds the optimal action for itself without considering the other sensor's actions, is evaluated to investigate the benefit of performing cooperative multi-sensor control. For the I-SC approach, the fused pseudo-updated posterior in the objective function~\eqref{eq:final_objective_fun} is replaced by the local pseudo-updated posterior, and each sensor independently performs an exhaustive search over the possible single-sensor control commands.

Finally, the recently proposed local Distributed Coordinate Descent approach (DCD-SC) is also tested to determine the comparative multi-object tracking accuracy and computation time. Wang et al.~\cite{ref:Xiaoying} describe the relationship between the number of coordinate descent runs required to reach a likelihood of finding the global optimum. Given $\mathfrak{M}$ local optima and the probability of finding the global optimum $P_{\mathrm{success}}$, the number of required runs is:

\begin{equation}
    m=\left\lceil\frac{\mathrm{log}(1-P_{\mathrm{success}})}{\mathrm{log}(1-\frac{1}{\mathfrak{M}})}\right\rceil
    \label{eq:coordinate_descent}
\end{equation}

Assuming that $\mathfrak{M}=2\lvert N_{\text{max}}\rvert$, where $N_{\text{max}}=\underset{s\in\mathbb{S}}{\text{max}}\lvert N^{(s)}\rvert$, then to achieve a $95\%$ probability of finding the global optimum in the first scenario, the sensor with the most neighbors requires 35 runs, and in the second scenario, the sensor with the most neighbors requires 47 runs. This is an impractical number of runs given the high computational cost; therefore, we aim for a sub-optimal but computationally tractable solution. To compare the computation time and accuracy trade-off, several different numbers of runs are tested. All of the multi-sensor control methods use the objective function and adaptive complementary fusion proposed in this paper.

All experiments were performed on a computer running MATLAB 2023a on Windows 10, with an AMD Ryzen 7 7700X @ 4.5GHz processor and 32 GB of RAM, to ensure consistency of computational performance across all methods. We provide our implementation and results on GitHub.\footnote{Available at: \url{https://github.com/AidanBlair/FDCD-SC}}

\subsection{Results}
The OSPA~\cite{ref:OSPA} and OSPA$^{(2)}$~\cite{ref:OSPA2} distance metrics are used to quantify the multi-target tracking accuracy of the different multi-sensor control methods. OSPA is a metric that measures both the localization and cardinality errors between the true and estimated multi-target states at each time step. The OSPA cutoff and order are set to $c=100$ and $p=1$. OSPA$^{(2)}$ extends the OSPA metric to also measure track label accuracy, accounting for track consistency. The OSPA$^{(2)}$ window length is set to $w=10$. To ensure statistical reliability, 30 Monte Carlo runs were performed for each method and each scenario.

Figure~\ref{fig:6cams_cardinality_error} shows the estimated target cardinality (number of estimated targets) averaged across the Monte Carlo runs at each time step, for each of the four tested methods in the first scenario. The closer the thin colored lines (estimated target cardinality) are to the thick black line (true target cardinality), the more accurate the number of estimated targets and, therefore, multi-object tracking accuracy.

Table~\ref{tab:6_sensor_results} details the average OSPA and OSPA$^{(2)}$ distances across all time steps and the 30 Monte Carlo runs for each method, as well as the average computation time per sensor for each method. Figure~\ref{fig:8cams_cardinality_error} and Table~\ref{tab:8_sensor_results} detail the same results for the second scenario.

\subsection{Discussion}
As discussed previously, the main source of multi-object tracking error in a scenario with a large number of mobile, dynamic targets is the targets not being observed by any of the sensors, i.e., cardinality error. For the first scenario, Figure~\ref{fig:6cams_cardinality_error} and Table~\ref{tab:6_sensor_results} show how, when the sensors are fixed, while initially the targets are all within the FoV of the sensor network, when the targets begin to move outside of the network's FoV the targets are rapidly lost, leading to a high cardinality error and poor multi-object tracking performance. In comparison, when the various multi-sensor control methods are used, the sensors are able to follow the trajectories of the targets, leading to a much lower cardinality error.


The proposed FDCD-SC method has the best cardinality error of all of the evaluated methods, rarely dropping targets due to subsequent missed detections, with the majority of the cardinality inaccuracies occurring at target deaths. The I-SC method performs similarly to FDCD-SC, with only a slightly worse cardinality error. This is due to the bottom four sensors all being able to track disjoint groups of 2 out of the 8 targets around them without the need for coordination; however, around time step 15, one of the targets starting at $(0, 300)$ or $(0, 700)$ tends to get dropped, due to the lack of communication between the two top sensors. DCD-SC with $m=1$ performs poorly, typically dropping 3 targets starting from time step 10. However, as $m$ increases, the cardinality error decreases, approaching the errors of ISC and FDCD-SC at $m=10$. This suggests that a large portion of the cardinality error at lower $m$ values is due to poor optimization of the local multi-sensor control commands, while the smaller portion that remains at any $m$ is due to the lack of global coordination between the sensors, resulting in sensors taking actions that drop targets.

Table~\ref{tab:6_sensor_results} highlights the significant improvement that multi-sensor control has over fixed, stationary sensors when tracking multiple dynamic targets, as well as the reduction in both the OSPA and OSPA$^{(2)}$ distances in FDCD-SC compared to I-SC and particularly DCD-SC. The lower multi-object tracking errors of FDCD-SC compared to ISC come at the cost of a significantly longer computation time than I-SC (\raisebox{0.5ex}{\texttildelow}12 times longer). By the time $m=10$, the computation time of DCD-SC has already surpassed FDCD-SC, suggesting that there is no accuracy-computation tradeoff between the two methods.



Figure~\ref{fig:8cams_cardinality_error} shows similar results in the more challenging second scenario. The proposed FDCD-SC method again has the lowest cardinality error of all of the evaluated methods, with a very low number of targets being dropped, demonstrating its capacity to consistently find globally optimal multi-sensor control commands. I-SC performs much worse in this scenario, quickly dropping a large number of targets, resulting in poor cardinality accuracy. This is caused by the arrangement of the targets and the small sensor FoV, leading to many instances of locally optimal single-sensor actions resulting in globally suboptimal multi-sensor actions. The DCD-SC method follows an interesting pattern in this scenario that can be divided into three periods. Time steps $1-20$, time steps $21-85$, and time steps $86-100$. From time step $k=1$ to $k\approx20$, the sensors are spread out far enough that there is no sensor node $s$ where $N^{(s)}=\mathcal{S}$, so there is a lack of global coordination and targets are dropped. However, in the second period from $k\approx21$ to $k\approx85$, targets are rarely dropped and the cardinality error tends to improve, as the sensors have moved close enough together such that at least one sensor is within communication range of the entire network, resulting in DCD-SC functioning identically to FDCD-SC and selecting globally optimal multi-sensor control commands. Finally, in the third period of $k\approx86$ to $k=100$, the sensors move far enough apart such that no sensor can communicate directly with the entirety of the network, and targets are again quickly dropped and cardinality error worsens. This behavior further supports the idea that the poor cardinality accuracy in DCD-SC is caused by the lack of global coordination, rather than suboptimal local optimization. Table~\ref{tab:8_sensor_results} highlights that FDCD-SC has the lowest OSPA and OSPA$^{(2)}$ errors of all the methods tested. The computation time for FDCD-SC is again significantly longer in comparison to I-SC (\raisebox{0.5ex}{\texttildelow}24 times longer) and performs better than DCD-SC at a comparable computation time ($m=3$).

We then compare the computational complexity of each evaluated method for a single time step. For all cases, the number of targets, the dimensionality of the state space, and the number of particles are assumed to be constant and are thus not included. The computational complexity of ACF in the worst case ($\mathcal{A}_k^{(\ell)}=\mathbb{S}$ for all $\ell$) scales linearly with the number of sensors, $\mathcal{O}(N_{\text{max}})$ for DCD-SC and $\mathcal{O}(\mathcal{S})$ for FDCD-SC. The computational cost for ISC scales linearly with the number of possible actions, $\mathcal{O}\left(\lvert\mathbb{U}\rvert\right)$. The computational cost for DCD-SC scales with the number of possible actions, the maximum local sensor neighborhood size, the number of coordinate descent iterations $I$, and the number of runs $m$ (with fusion occurring once at each multi-sensor control command evaluation), and is therefore $\mathcal{O}\left(mIN_{\text{max}}^2\lvert\mathbb{U}\rvert\right)$. The computational cost for FDCD-SC scales with the number of possible actions, the number of sensors, and the number of coordinate descent iterations (with fusion occurring once at each multi-sensor control command evaluation) and is therefore $\mathcal{O}\left(\lvert I\mathcal{S}^2\mathbb{U}\rvert\right)$.

For the same scenario, FDCD-SC is $I\mathcal{S}^2$ times more expensive than ISC and $\frac{\mathcal{S}^2}{mN_{\text{max}}^2}$ times more expensive than DCD-SC. If $m\left(\frac{N_{\text{max}}}{\mathcal{S}}\right)^2>1$, which is more likely in densely connected networks, then FDCD-SC will be computationally faster than DCD-SC. Conversely, if $m\left(\frac{N_{\text{max}}}{\mathcal{S}}\right)^2<1$, which is more likely in sparsely connected networks, then FDCD-SC will be computationally slower than DCD-SC.

\section{Conclusion}
\label{sec:Conclusion}
This paper addresses multi-sensor control in a distributed sensor network with a focus on multi-target tracking applications. First, an adaptive complementary fusion rule for LMB densities with advantageous properties for multi-sensor control was developed. Then, an information-theoretic objective function with collision-avoidance constraints was proposed. Finally, a fully distributed coordinate descent multi-sensor control method (FDCD-SC) was proposed and evaluated against several baseline methods. To evaluate the tracking performance of these methods, two scenarios with varying sensor \& target numbers, sensor profiles, and actions were designed, and the OSPA and OSPA$^{(2)}$ error metrics were employed.

The results demonstrated that FDCD-SC significantly outperformed all compared methods in multi-object tracking accuracy in both scenarios, both in terms of cardinality error and OSPA \& OSPA$^{(2)}$ errors, demonstrating the advantages of performing globally coordinated multi-sensor control in multi-object tracking. However, FDCD-SC has a higher computational cost than I-SC, and depending on the sensor network topology and the number of runs that are performed with DCD-SC also has a higher computational cost than DCD-SC. For future research, a distributed multi-sensor control that is robust to communication dropout/attacks could be explored.

\section*{Acknowledgment}

The Australian Research Council supported this work through Grant DE210101181.

\bibliographystyle{IEEEtran}
\bibliography{AidanBib.bib}

\end{document}